\documentclass[]{article}
\usepackage{amsmath,amssymb,amsthm}
\usepackage{mathtools}
\usepackage{xspace}
\usepackage{tikz}
\usetikzlibrary{snakes}
\usepackage{enumerate}

\newtheorem{theorem}{Theorem}
\newtheorem{problem}{Problem}
\newtheorem{definition}{Definition}
\newtheorem{example}{Example}
\newtheorem{lemma}{Lemma}

\newcommand{\Oh}{\mathcal{O}}
\newcommand{\eps}{\varepsilon}
\newcommand{\norm}[1]{\left\lVert#1\right\rVert_2}
\newcommand{\expect}[1]{\mathbb{E} \left[#1\right]}
\newcommand{\var}[1]{\mathrm {Var} \left[#1\right]}
\newcommand{\prob}[1]{\mathrm{Pr} \left[#1\right]}

\title{Approximate Hamming distance in a stream}

\author{Rapha{\"e}l Clifford, Tatiana Starikovskaya}

\begin{document}
\maketitle

\begin{abstract}
We consider the problem of computing a $(1+\eps)$-approximation of the Hamming distance between a pattern of length $n$ and successive substrings of a stream.  We first look at the one-way randomised communication complexity of this problem, giving Alice the first half of the stream and Bob the second half.    We show the following:
\begin{itemize}
 \item If Alice and Bob both share the pattern then there is an $\Oh (\eps^{-4} \log^2 n)$ bit randomised one-way communication protocol.
 \item If only Alice has the pattern then there is an  $\Oh (\eps^{-2}\sqrt{n}\log n)$ bit randomised one-way communication protocol.
\end{itemize}
We then go on to develop  small space streaming algorithms for $(1+\eps)$-approximate Hamming distance which give worst case running time guarantees per arriving symbol.
\begin{itemize}
 \item For binary input alphabets there is an $\Oh(\eps^{-3} \sqrt{n} \log^{2} n)$ space and $\Oh(\eps^{-2} \log{n})$ time streaming $(1+\eps)$-approximate Hamming distance algorithm. 
 \item For general input alphabets there is an $\Oh(\eps^{-5} \sqrt{n} \log^{4} n)$ space and $\Oh(\eps^{-4} \log^3 {n})$ time streaming $(1+\eps)$-approximate Hamming distance algorithm. 
\end{itemize}
\end{abstract}

\section{Introduction}
\label{sec:intro}
	We study the complexity of one of the most basic problems in pattern matching, that of approximating the Hamming distance.  Given a pattern $P$ of length $n$  the task is to output a $(1+\eps)$-approximation of the Hamming distance between $P$ and every $n$-length substring of a longer text. We provide the first efficient one-way randomised communication protocols as well as a new, fast and space efficient streaming algorithm for this problem.  

The general task of efficiently computing the Hamming distances offline between a pattern and a text has been studied for many years.  When the input is binary and the text has length proportional to that of the pattern, then all outputs can be computed exactly in $\Oh (n\log{n})$ time by repeated application of the fast Fourier transform~\cite{FP:1974}. For larger alphabets, $\Oh (n\sqrt{n\log{n}})$ time solutions were first discovered in the 1980s~\cite{Abrahamson:1987,Kosaraju:1987}. The fastest randomised algorithm for $(1+\eps)$-approximate Hamming distance computation for large alphabets was due for many years to Karloff from 1993~\cite{Karloff:1993} running in $\Oh (\eps^{-2} n\log^2{n})$ time overall.  In a breakthrough paper in 2015 a new algorithm was given improving the time complexity to $\Oh (\eps^{-1} n\log^3{n}\log{\eps^{-1}} )$~\cite{KP:2015}.  These fast methods all require linear space and up until this point no sublinear space solutions have been known.

The first basic question that arises is whether it is in fact possible to give a $(1+\eps)$-approximation to the Hamming distance in a streaming setting while using only sublinear space. In order to explore this question we start our study by considering two natural communication complexity problems which may also be of independent interest. Any lower bound for these communication problems will give a lower bound for the space usage of a corresponding streaming algorithm. This follows from a standard reduction where a space efficient streaming algorithm is converted into a communication protocol by taking a snapshot of memory after some symbol of the input has been read in and then sending that this snapshot to the other player. On the other hand, the upper bounds we provide will set targets for space bounds in the streaming setting.

Any streaming pattern matching algorithm using a pattern of length $n$ can be reduced to repeated application of a streaming algorithm that runs on texts of length $2n$. This is done by splitting the stream into substreams of length $2n$ which overlap by $n$ symbols.  As a result we consider communication problems with these parameter settings for pattern and text length.

\begin{problem}\label{prob:pattern-known}
Consider a text $T$ of length $2n$ and a pattern $P$ of length~$n$. Let Alice hold the information about the first half of the text and the whole of the pattern, and let Bob hold the information about the second half of the text and the pattern. Bob must output $(1+\eps)$-approximations of the Hamming distance for each alignment of  $P$ and $T$.
\end{problem}

A lower bound for the communication complexity of this problem follows from a combination of the lower bound for the communication complexity of a windowed counting problem introduced by Datar et al. in 2002~\cite{DGIM:02} and the one-way communication complexity lower bound for approximating the Hamming distance between two $n$-bit strings from~\cite{JW:2013}. 

For the first part consider the following communication problem. Assume that there is a bit vector $B$ of length $2n$.  Let Alice hold the information about the first half of $B$, and let Bob hold the information about the second half of $B$. Bob must output $(1+\eps)$-approximations of the number of set bits in each window of length $n$. Datar et al. showed that 
Alice will have to send to Bob $\Omega (\eps^{-1} \log^2 \eps^{-1} n)$ bits of information. There is a straightforward reduction from this basic counting problem to Problem~\ref{prob:pattern-known} which then gives us the same lower bound. We set $T = B$ and $P = 00 \ldots 0$ and then a $(1+\eps)$-approximation of the Hamming distance at an alignment $i$ of  $P$ and $T$ gives  a $(1+\eps)$-approximation of the number of set bits in the window $T[i, i+ n-1]$. For the second part we use Theorem 4.1 from~\cite{JW:2013} which states that the one-way communication complexity of $(1+\eps)$-approximate Hamming distance for two strings of length $n$ is $\Omega(\eps^{-2} \log{n})$ for constant error probability. 
Combining these two lower bounds together we get a lower bound of $\Omega(\eps^{-2}\log{n} + \eps^{-1}\log^2{\eps^{-1} n})$ for the communication complexity of Problem~\ref{prob:pattern-known}.  

Our first result is an efficient one-way communication protocol for Problem~\ref{prob:pattern-known} whose complexity is only slightly higher than this lower bound.  In our protocol Alice uses the fact that Bob knows the pattern as well to give an efficient encoding for parts of her half of the text which are at small Hamming distance from the pattern.

\begin{theorem}\label{th:pattern-known}
Problem~\ref{prob:pattern-known} has one-way randomised communication complexity $\Oh (\eps^{-4}\log^2 n)$.
\end{theorem}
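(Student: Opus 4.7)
The plan rests on the decomposition $d_i = a_i + b_i$, where $a_i = \mathrm{Ham}(P[0, n-i-1], T_A[i, n-1])$ is Alice's contribution at the crossing alignment $i$ (with $T_A$ denoting Alice's half of $T$) and $b_i$ is the analogous contribution from Bob's half, which Bob computes exactly. Crucially $a_i \le d_i$, so any $(1+\eps)$-approximation $\tilde a_i$ of $a_i$, added to $b_i$, yields a $(1+\eps)$-approximation of $d_i$. Alice's sole task is thus to transmit enough information about $T_A$ that Bob can $(1+\eps)$-approximate $a_i$ for every $i \in \{1,\ldots,n-1\}$; the trivial alignments $i=0$ and $i=n$ are handled by Alice sending a single $\Oh(\log n)$-bit value and by Bob's local computation, respectively.

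To handle a general alphabet I would first apply a Karloff-style alphabet reduction. Using shared randomness, fix $r = \Theta(\eps^{-2}\log n)$ independent random maps $\phi_t : \Sigma \to \{\pm 1\}$. A standard analysis shows that averaging the cross-correlations $\sum_{j} \phi_t(P[j])\,\phi_t(T[i+j])$ over $t=1,\ldots,r$ gives a $(1+\eps/2)$-approximation of $n - d_i$ simultaneously at every $i$, with high probability after a union bound. It therefore suffices, for each projection $t$, to enable Bob to sufficiently approximate the Alice-side partial sum
\[
A_i^{(t)} = \sum_{k=i}^{n-1} \phi_t(P[k-i])\,\phi_t(T_A[k])
\]
using $\Oh(\eps^{-2}\log n)$ bits per projection; the $r$ projections multiplied by this budget give the claimed $\Oh(\eps^{-4}\log^2 n)$.

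Within a single projection I would partition the crossing alignments into $\Oh(\log n)$ geometric scales according to the magnitude of $a_i$, and deploy a distinct sub-protocol per scale. For ``large'' scales (where $a_i$ exceeds a suitable threshold) Alice uses random sub-sampling: a position set of size $\Oh(\eps^{-2}\log n)$ is fixed via shared coins, Alice transmits only the $\pm 1$ values of $T_A$ at those positions, and Bob estimates $A_i^{(t)}$ by a scaled sampled cross-correlation. For the ``small'' scales, where $T_A[i,n-1]$ nearly matches a prefix of $P$, Alice exploits Bob's knowledge of the pattern by sending only an explicit list of mismatch positions for the candidate alignments, using Karp--Rabin fingerprints and a $k$-mismatch style streaming encoding; the classical combinatorics of approximate pattern matching ensure that the set of small-distance alignments admits a succinct description.

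The main obstacle is the small-distance regime: a naive mismatch-list encoding can cost $\Omega(k^2 \log n)$ bits when many alignments are simultaneously $k$-close. Overcoming this requires exploiting both the arithmetic-progression structure of the set $\{i : a_i \le k\}$ (a consequence of the Fine--Wilf / period lemma, which forces near-matching alignments to lie on a small number of periodic progressions) and the sharing of sub-sketches across the $r$ Karloff projections, so that the per-projection cost across all $\log n$ scales collapses to $\Oh(\eps^{-2}\log n)$ bits, yielding the overall $\Oh(\eps^{-4}\log^2 n)$ bound.
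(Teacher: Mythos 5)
Your opening decomposition (Bob computes his own half's contribution exactly, so it suffices to $(1+\eps)$-approximate the Alice-side contribution $a_i$ at every crossing alignment) matches the paper, but the machinery you propose for approximating $a_i$ has two genuine gaps. First, uniform position sub-sampling cannot give a relative approximation at intermediate distances: if Alice transmits $T_A$ restricted to a random set of $s$ positions, Bob can estimate $a_i$ only to additive accuracy on the order of $\sqrt{n a_i/s}$, so a $(1+\eps)$-approximation forces $s = \Omega(n/(\eps^2 a_i))$, which fits your $\Oh(\eps^{-2}\log n)$ budget only when $a_i = \Omega(n/\log n)$. For $a_i \approx \sqrt{n}$, say, your ``large scale'' sub-protocol is off by a polynomial factor. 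The tool that achieves relative error with $\Oh(\eps^{-2})$ words \emph{independent of the magnitude of the distance} is a linear sketch ($p$-stable, as the paper uses, or AMS/JL), not sampling; and since $p$-stable sketches already handle general alphabets, the Karloff projection layer is not needed here and consumes an $\eps^{-2}\log n$ factor of the budget that the paper spends on the sliding-window structure instead.

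Second, and this is the heart of the problem: a single sketch answers one fixed alignment, whereas Bob must answer all $n$ of them, and you never close this gap --- your last paragraph names the obstacle (many simultaneously near-matching alignments) and asserts it can be overcome via periodicity and shared sub-sketches, but gives no protocol. The paper's resolution is different from anything you describe. With $k = 6/\eps$, for each scale $j = 1, \ldots, \lfloor \log_k n\rfloor$ Alice commits to a single anchor $i_j$: the leftmost position whose aligned pattern prefix is within Hamming distance $k^{j+1}$ of the text. She splits $T[i_j,n]$ into $k^2$ blocks so that each block has at most $k^{j-1}$ mismatches against the pattern, and sends one suffix sketch per block border: $\Oh(\log n)$ scales, times $k^2 = \Oh(\eps^{-2})$ borders, times $\Oh(\eps^{-2}\log n)$ bits per sketch. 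At a query alignment $i$, Bob uses the nearest block-border suffix sketch for the bulk of the overlap, and for the partial block between $i$ and that border he \emph{substitutes the pattern aligned at $i_j$ for the unknown text}. This substitution costs additive error at most $k^{j-1}$, while the choice of anchors guarantees the true distance is at least $k^{j+1}$, so the relative error is $k^{-2} = \Oh(\eps^2)$. This anchor-and-substitute step is what makes all $n$ alignments answerable from $\Oh(\eps^{-4}\log^2 n)$ bits; without it (or an equivalent), your plan does not assemble into a proof.
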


As a model for streaming pattern matching, this communication upper bound requires that a copy of the pattern is available at all times.  Our main interest is however in algorithms whose total space complexity is sublinear in the pattern size. In order to model this situation more accurately we now consider a stronger three party communication problem.  

\begin{problem}\label{prob:pattern-unknown}
Assume that there is a text $T$ of length $2n$ and a pattern $P$ of length~$n$. Let Alice hold the information about the pattern, let Bob hold the information about the first half of the text, and let Charlie hold the information about the second half of the text. Alice will send one message to Bob who will then send one message to Charlie. Charlie must output $(1+\eps)$-approximations of the Hamming distance for each alignment of $P$ and $T$.
\end{problem}

Somewhat surprisingly, we are still able to obtain a sublinear protocol for this new problem although the bound is higher than for the simpler Problem~\ref{prob:pattern-known}.  The main technical elements of this communication protocol combine the newly introduced idea of approximate periods with succinct run-length encoded representions of the input.

\begin{theorem}\label{th:pattern-unknown}
Problem~\ref{prob:pattern-unknown} has one-way randomised communication complexity $\Oh (\eps^{-2}\sqrt{n} \log n)$.
\end{theorem}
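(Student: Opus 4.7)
The plan is to exploit a structural dichotomy based on the notion of an \emph{approximate period} of $P$: call $p$ an approximate period if $P[1..n-p]$ and $P[p+1..n]$ have Hamming distance below a threshold tuned to $\eps$ and to the Hamming distances being approximated. Following the template of pattern matching with $k$ mismatches, I expect to prove that either (i)~$P$ has an approximate period of length at most $\sqrt n$, or (ii)~at most $\Oh(\sqrt n)$ alignments $i$ have $H(P,T[i..i+n-1])$ below a chosen threshold.

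In the periodic case, Alice describes $P$ succinctly as a base block of length $p$ together with the $\Oh(\sqrt n)$ corrections needed to reproduce $P$ from its $p$-periodic extension, and sends this $\Oh(\sqrt n \log n)$-bit description to Bob, who forwards it to Charlie. Bob then computes the prefix partial distances $d_i = H(P[1..n-i+1], T[i..n])$ for each alignment $i$ within his half. The approximate periodicity will imply that, after quantising to $(1+\eps)$ resolution, the sequence $(d_i)$ is piecewise constant with only $\Oh(\eps^{-2}\sqrt n)$ breakpoints, so it admits a run-length encoding of $\Oh(\eps^{-2}\sqrt n \log n)$ bits. Charlie reconstructs each $d_i$ and adds the complementary suffix distance $H(P[n-i+2..n], T[n+1..n+i-1])$, which depends only on $P$ and his own half of the text.

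In the aperiodic case only $\Oh(\sqrt n)$ alignments can be ``close''; for every other alignment any crude estimate suffices because the multiplicative $\eps$-tolerance absorbs a large additive slack. Alice sends $(1+\eps)$-Hamming sketches of appropriate prefixes and suffixes of $P$ (which are of size $\Oh(\eps^{-2}\log n)$ each), Bob forwards the positions of the $\Oh(\sqrt n)$ candidate close alignments together with the corresponding sketches of $T[1..n]$, and Charlie combines the sketches. Again this stays within $\Oh(\eps^{-2}\sqrt n \log n)$ bits in total.

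The main obstacle will be calibrating the dichotomy so that the two thresholds (for defining ``approximate period'' and for defining ``close alignment'') are mutually consistent and lead to the clean run-length bound in the periodic case. In particular, proving that the quantised partial-distance sequence has only $\Oh(\eps^{-2}\sqrt n)$ breakpoints requires decomposing alignments by their residue modulo~$p$, showing that within each residue class the partial distance varies smoothly enough to contribute only $\Oh(\eps^{-1}\log n)$ run-length pieces, and showing that the corrections introduced by the approximate period contribute only an $\Oh(\sqrt n)$ additive budget overall, which the $(1+\eps)$-quantisation can absorb.
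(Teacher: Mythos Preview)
Your high-level dichotomy matches the paper's: large Hamming distances are handled by $\Oh(\sqrt n)$ sketches (discarding a length-$\sqrt n$ prefix costs at most an $\eps$-fraction), and the aperiodic case gives at most one ``close'' alignment per $\sqrt n$-block by the triangle inequality. The paper actually interleaves the two regimes rather than doing a clean global dichotomy: it takes the longest prefix $P[1,n-j^\star B]$ whose $(2+\eps)\tau$-period is below $B$, treats blocks $j<j^\star$ as aperiodic, and blocks $j\ge j^\star$ as periodic.

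Where your plan genuinely diverges---and where the gap lies---is the periodic case. The paper does \emph{not} transmit a run-length encoding of the quantised distance sequence $(d_i)$. Instead, Bob locates the first close alignment $p$, and sends (i)~the $\ell$-RLE encoding of the periodic prefix of $P$, (ii)~the $\le\tau$ mismatch positions between that prefix and $T$ at alignment $p$, and (iii)~the last $\sqrt n$ symbols of his half. From this Charlie reconstructs $T[p,n]$ \emph{exactly} and then computes whatever he wants. Your alternative---arguing that the $(1+\eps)$-quantised sequence $(d_i)$ has only $\Oh(\eps^{-2}\sqrt n)$ breakpoints---is the step that is not under control. The monotonicity of $d_i$ along a residue class modulo $p$ is exact only for an exact period; with an approximate period having $m=\Theta(\eps^{-1}\sqrt n)$ corrections, each correction position $j\in C$ perturbs the step $d_i\to d_{i+p}$ for \emph{every} $i$ with $j\le n-i-p+1$, not just once, so the ``$\Oh(\sqrt n)$ additive budget'' you invoke is not what you actually get. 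Without a sharper argument there is no bound on how many extra level-crossings these perturbations create. The paper's text-reconstruction trick sidesteps this entirely and is both simpler and tighter; I would adopt it in place of your distance-sequence encoding.
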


Having investigated the communication complexity of $(1+\eps)$-approximate Hamming distance we can now define the streaming $(1+\eps)$-approximate Hamming distance problem.

\begin{problem}\label{prob:streaming}
Consider a pattern $P$ of length $n$ and a stream arriving one symbol at a time. We must output a $(1+\eps)$-approximation to the Hamming distance between $P$ and the latest $n$-length suffix of the stream as soon as a new symbol arrives. In this problem we cannot, for example, store a copy of the pattern or stream without accounting for it.
\end{problem}

The upper bounds for the communication complexity of Problem~\ref{prob:pattern-unknown} suggest space upper bounds we shall aim for in order to develop an optimal algorithm for the $(1+\eps)$-approximate Hamming distance in the streaming setting. We make the first step towards this direction and show two randomised sublinear-space algorithms for the problem. We start by giving a solution for the case when both the pattern and the text are binary strings.  

\begin{theorem}\label{thm:streaming}
When both $P$ and $T$ are binary, there is an algorithm for Problem~\ref{prob:streaming} which uses $\Oh(\eps^{-3} \sqrt{n} \log^{2} n)$ bits of space and runs in $\Oh(\eps^{-2} \log{n})$ time per arriving symbol.
\end{theorem}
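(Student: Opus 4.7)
The plan is to adapt the ideas behind the protocol of Theorem~\ref{th:pattern-unknown} to the streaming model, combining a $\sqrt{n}$-block decomposition of the pattern with sliding-window sketches for binary Hamming distance. I would partition $P$ into $\sqrt{n}$ contiguous blocks $P_1,\ldots,P_{\sqrt{n}}$ of length $\sqrt{n}$, so that at any alignment ending at stream position $t$,
\[
 \text{Ham}\bigl(P,\, T[t{-}n{+}1..t]\bigr) \;=\; \sum_{j=1}^{\sqrt{n}} \text{Ham}\bigl(P_j,\, T[t{-}n{+}(j{-}1)\sqrt{n}{+}1 \,..\, t{-}n{+}j\sqrt{n}]\bigr).
\]
Because the summands are non-negative, $(1+\eps)$-approximating each one individually yields a $(1+\eps)$-approximation of the total, reducing the problem to maintaining, for each $j$, a $(1+\eps)$-approximate Hamming distance estimate against a sliding $\sqrt{n}$-length window whose offset from the stream head is $(\sqrt{n}-j)\sqrt{n}$.

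For each block I would maintain a Kushilevitz--Ostrovsky--Rabani style binary sketch of dimension $\Oh(\eps^{-2}\log n)$, wrapped in a Datar--Gionis--Indyk--Motwani sliding-window data structure to accommodate the window shifting by one per arrival; this costs an $\Oh(\eps^{-1})$ overhead. To cover all magnitudes of Hamming distance I would maintain $\Oh(\log n)$ copies at geometrically decreasing subsampling rates per block and, at query time, select the finest scale whose counter falls in the accurate Chernoff range; this contributes a further $\log n$ factor. Across the $\sqrt{n}$ blocks the space bill is thus $\Oh(\eps^{-3}\sqrt{n}\log^2 n)$ bits. A cyclic buffer of the last $\sqrt{n}$ stream symbols (negligible in size) lets the inner-block sketches be fed with symbols that are no longer the newest arrival. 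At each arrival only a constant number of block sketches undergo state change (one gains the freshly arrived symbol, another drops an oldest one, and symbols handed between blocks come from the buffer), and a single sketch update costs $\Oh(\eps^{-2}\log n)$ time, matching the target per-symbol bound.

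The main obstacle is achieving the \emph{worst-case} per-symbol running time rather than merely an amortised one. The standard sliding-window frameworks rebuild subsketches when they expire, producing amortised bounds; de-amortisation here requires running two overlapping copies of every block sketch in staggered phases, so that while one is being reset the other answers queries. A secondary obstacle is the global error analysis: each sketch's failure probability must be boosted to $1-1/\operatorname{poly}(n)$ via median-of-$\Oh(\log n)$ independent repetitions so that a union bound over all alignments $t$ succeeds, which is already absorbed in the $\log^2 n$ factor. Finally, one must be careful in the scale-selection step: when some per-block distance is tiny or zero, the sampled sketch at that scale is unreliable, but because all summands are non-negative one may safely upper-bound such terms by a threshold of order $\eps$ times the running aggregate without inflating the overall multiplicative error beyond $(1+\eps)$.
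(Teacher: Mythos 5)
There is a genuine gap here, and it is exactly the obstacle the paper's ``super-sketch'' is designed to overcome. Your reduction keeps one sliding window per pattern block, so when a new symbol arrives \emph{every} one of the $\sqrt{n}$ block windows shifts by one position --- not a constant number of them, as you claim. Worse, $\mathrm{Ham}(P_j, W)$ for a sliding window $W$ is not a sum of per-item contributions that stay fixed as the window moves: each text symbol's contribution depends on which position of $P_j$ it is currently aligned with, and that alignment changes at every step. So a DGIM-style wrapper does not apply, and a linear sketch of the window cannot be updated by ``add the entering symbol, subtract the leaving one,'' since the coefficients attached to all surviving symbols change when the window slides. Finally, even if the per-block estimates were maintainable, producing the output requires evaluating and summing $\sqrt{n}$ separate sketch estimates at every arrival, costing $\Omega(\eps^{-2}\sqrt{n})$ time per symbol rather than the claimed $\Oh(\eps^{-2}\log n)$.

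The paper sidesteps all three problems by never sliding a sketched window. The text is cut into \emph{fixed} non-overlapping blocks of length $B = \eps^{-1}\sqrt{n}$; at alignment $i$ the pattern is split into a prefix of length $B-i$, a middle part of length $n-B$ that is forced to align with a block border, and a suffix of length $i$. The middle part is handled by a single \emph{super-sketch} --- a $\pm 1$-signed linear combination of the block sketches --- whose expectation and variance (via Khintchine's inequality) show it $(1+\eps)$-approximates the whole middle-part distance in one $\Oh(\eps^{-2})$-time comparison; only $B+1$ pattern super-sketches need to be precomputed, and one text super-sketch is maintained incrementally. The misaligned prefix is handled not by sketching at all but by the data structure Alice sends Bob in the Problem~\ref{prob:pattern-known} protocol (built once per block and de-amortised), and the suffix by a partial sketch of the current block. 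If you want to rescue your decomposition you would need an analogue of this aggregation step --- some way to estimate the \emph{sum} over all blocks with a single constant-size comparison --- which is precisely the super-sketch idea.
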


The same bounds hold for alphabets of constant size $\sigma$ as we can map each occurrence of the $i^{th}$ symbol of the alphabet in the pattern or in the text to a binary string $0^{i-1} 1 0^{\sigma-i}$, which will result in doubling the Hamming distance between the pattern and the text at each particular alignment. 

For polynomial size alphabets our bounds are higher by a factor $\eps^{-2} \log^2 n$ and our approach is based on the mapping idea of Karloff~\cite{Karloff:1993}. In that paper he showed that there exists $\Theta(\eps^{-2} \log^2 n)$ mappings $map_j$ of the alphabet onto $\{0,1\}$ such that an $(1+\eps/3)$-approximation of the Hamming distance between $P$ and $T$ at a particular alignment can be given by a normalised average of the Hamming distances between $map_j(P)$ and $map_j(T)$ at this alignment. Moreover, Karloff showed that the mappings can be generated in $\Oh (\eps^{-2} \log^3 n)$ space and $\Oh (\log n)$ time per symbol. For each pattern-text pair mapped on to a binary alphabet we then run the algorithm of Theorem~\ref{thm:streaming} to find $(1+\eps/3)$-approximations and finally obtain: 

\begin{theorem}
There is an algorithm for Problem~\ref{prob:streaming} which uses $\Oh(\eps^{-5} \sqrt{n} \log^{4} n)$ bits of space and runs in $\Oh(\eps^{-4} \log^3 {n})$ time per arriving symbol.
\end{theorem}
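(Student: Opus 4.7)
The plan is essentially to verify the resource bookkeeping of the reduction sketched immediately before the theorem statement: invoke Karloff's mapping result to reduce the polynomial-alphabet problem to many instances of the binary-alphabet problem, and run the algorithm of Theorem~\ref{thm:streaming} on each instance in parallel.

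First I would recall Karloff's construction: there are $k = \Theta(\eps^{-2} \log^2 n)$ mappings $map_1, \ldots, map_k : \Sigma \to \{0,1\}$, generated from a common pool of random bits in $\Oh(\eps^{-2}\log^3 n)$ space and $\Oh(\log n)$ time per arriving symbol, such that for every alignment $i$, an appropriate normalised average of the binary Hamming distances $H_j(i) := \mathrm{Ham}(map_j(P), map_j(T[i, i+n-1]))$ is a $(1+\eps/3)$-approximation of $\mathrm{Ham}(P, T[i, i+n-1])$. I would then set up $k$ independent streaming subroutines: subroutine $j$ receives the stream $map_j(T)$, stores (an implicit description of) the pattern $map_j(P)$, and runs the binary algorithm of Theorem~\ref{thm:streaming} with error parameter $\eps/3$ to maintain a $(1+\eps/3)$-approximation $\widetilde{H}_j(i)$ of $H_j(i)$. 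At each new symbol $T[i]$, every subroutine is fed the appropriate mapped bit and updates in $\Oh((\eps/3)^{-2} \log n) = \Oh(\eps^{-2}\log n)$ time, after which we compute the normalised average of the $\widetilde{H}_j(i)$ to output the final estimate.

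For correctness, a $(1+\eps/3)$-approximation of each $H_j(i)$ implies that the normalised average deviates from a $(1+\eps/3)$-approximation of $\mathrm{Ham}(P,T[i,i+n-1])$ by at most an additional $(1+\eps/3)$ factor, so the composed error is $(1+\eps/3)^2 \le 1 + \eps$ for sufficiently small $\eps$. To obtain the stated high-probability guarantee I would boost the failure probability of each binary subroutine to $1/\mathrm{poly}(n)$ by standard independent repetitions (costing only $\Oh(\log n)$ factor already absorbed in Theorem~\ref{thm:streaming}), then take a union bound over the $k$ subroutines and over the polynomially many alignments of interest.

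The resource calculation is then immediate. Space: generating and storing the mappings costs $\Oh(\eps^{-2}\log^3 n)$, while running $k = \Oh(\eps^{-2}\log^2 n)$ copies of the binary algorithm costs $k \cdot \Oh(\eps^{-3}\sqrt{n}\log^2 n) = \Oh(\eps^{-5}\sqrt{n}\log^4 n)$, which dominates. Time per symbol: generating the mapped bits costs $k \cdot \Oh(\log n)$, and updating the $k$ subroutines costs $k \cdot \Oh(\eps^{-2}\log n) = \Oh(\eps^{-4}\log^3 n)$, which dominates. I do not expect any real obstacle: the only points that need genuine care are (i) checking that Karloff's result indeed gives an approximation of the right form even when the individual $H_j(i)$ are only approximated rather than computed exactly, which reduces to the composition inequality above, and (ii) ensuring the union bound over $k \cdot \mathrm{poly}(n)$ events costs at most the stated $\log$ factors in the per-instance boost.
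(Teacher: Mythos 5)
Your proposal is correct and follows essentially the same route as the paper: Karloff's $\Theta(\eps^{-2}\log^2 n)$ mappings onto $\{0,1\}$, one instance of the binary algorithm of Theorem~\ref{thm:streaming} with parameter $\eps/3$ per mapping, and the same multiplication of the space and time bounds by the number of mappings. The composition inequality $(1+\eps/3)^2 \le 1+\eps$ and the probability-boosting remarks you add are consistent with (and slightly more careful than) what the paper states explicitly.
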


Our solution has guaranteed worst case complexity per arriving symbol and uses roughly the square root of the space required by the known offline $(1+\eps)$-approximate algorithms. A key technical innovation for our space reduction is the notion we introduce of a super-sketch. This a compact and efficiently updateable representation of consecutive text substrings which we require to be able to achieve sublinear space.  For simplicity we will make the natural assumption throughout that  $\eps < 1/2$.

\subsection{Related work and lower bounds}
The one-way communication complexity of a number of variants of Hamming distance computation has been studied over the years. These includes $(1+\eps)$-approximation~\cite{JW:2013}, the so called gap Hamming problem~\cite{CR:2012} and a bounded version known as $k$-mismatch~\cite{HSZZ:06}. However all this previous work has assumed that both Alice and Bob have strings of the same length and so need only give a single output. There has also been great interest in efficient streaming algorithms over the last 20 years, following the seminal work of~\cite{AMS:1996}. In relation specifically to pattern matching problems,  where space is not limited but where an output must be computed after every new symbol of the text arrives, the Hamming distance between the pattern and the latest suffix of the stream can be computed online in $O(\sqrt{n\log{n}})$ worst-case time per arriving symbol or $O(\sqrt{k}\log{k} + \log{n})$ time for the $k$-mismatch version~\cite{CS:2010}.  Both these methods however require $\Theta(n)$ space.  Using the same approach, a number of other approximate pattern matching algorithms have also been transformed into efficient linear space online algorithms including~\cite{AAKLP:2008,AABLLPSV:2009,AALP:2006,AFM:1994, AEE:2006,ACHP:2003,LV:1988kdiff}. In 2013 a small space streaming pattern matching algorithm was shown for parameterised matching~\cite{JPS:2013} and in 2016 for the $k$-mismatch problem~\cite{CFPSS:2016}. The latter $k$-mismatch paper is of particular relevance to our work. In ~\cite{CFPSS:2016} as a part of a space-efficient streaming algorithm for the $k$-mismatch problem, the authors presented a $(1+\eps)$-approximate algorithm with space $\Oh(\eps^{-2} k^2 \log^7 n)$ and running time $\Oh(\eps^{-2} \log^5 n)$ per arriving symbol that returns a $(1+\eps)$-approximation for all alignments of the pattern and text where the Hamming distance is at most $k$. The algorithm we give in this current paper can be seen a generalisation of this work, both removing the requirement for a prespecified threshold $k$ and also using less space when $k \gtrsim n^{1/4}$.

\section{Overview}
\label{sec:overview}
        In this section we give an overview of the main ideas needed for our results. We will make extensive use of sketching.  Alon, Matias and Szegedy were first to show that sketching can be used to approximate frequency statistics of a stream with a particular emphasis on $F_2$~\cite{AMS:1996}. Later their sketching technique was generalized to allow approximation of $||x_1-x_2||_p$ for two vectors $x_1$ and $x_2$ and any $p \in (0; 2]$ by Indyk et al.~\cite{Indyk-stable,CDIM:03}. We will use the sketches of Indyk et al.\@ to show the communication complexity upper bounds. These sketches are based on $p$-stable distributions and have the advantage that they can be used even for large-size alphabets. For our streaming algorithm where we assume that the input alphabet is binary we will use simpler sketches based on the original technique of Alon et al.  

\subsection{Communication complexity}
To show communication complexity bounds we will be using sketches based on $p$-stable distributions (see~\cite{Indyk-stable} and Sections 4.1 and 5.1 of~\cite{CDIM:03}). Setting $\sigma$ to be the alphabet size, a sketch of a string $x$ is defined as a vector $sk (x)$ of length $\Theta(\eps^{-2})$ such that

$$sk (x) [i] = \sum_j Y_{i,j} \cdot x[j]$$

\noindent where each $Y_{i,j}$ is drawn independently from a random stable distribution with parameter $p \le \eps / \log \sigma$. For two vectors $x_1$ and $x_2$ it can be shown that with constant probability the median of values $|sk (x_1) [i] - sk (x_2) [i]|$, appropriately scaled, is a $(1+\eps)$-approximation of the Hamming distance. Importantly, variables $Y_{i,j}$ can be generated when we need them with the help of Nisan's pseudo-random generator, which requires only $\Oh( \log^2 n)$ random bits. 

\subsubsection{Problem~\ref{prob:pattern-known}~--- both Alice and Bob know the pattern}
The main idea of our communication complexity upper bound for Problem~\ref{prob:pattern-known} is that if the Hamming distance between the text and the pattern at a particular alignment is (relatively) small, then Alice and Bob can use the pattern to describe the part of the text aligned with the pattern. 

At each alignment the pattern can be divided into two parts~--- a prefix, aligned with Alice's half of the text, and a suffix, aligned with Bob's half of the text. Alice needs to transmit information that will help Bob approximate the  Hamming distance between these different prefixes of the pattern and her half of the text. She does so by selecting a logarithmic number of prefixes of the pattern with Hamming distances $\Theta(\eps^{-j})$ from the text. She then divides the part of the text aligned with each of these prefixes into blocks such that the mismatches are evenly spread across the blocks, and sends each block's starting position and sketch to Bob.

When Bob wants to compute the Hamming distance between a prefix $P'$ of the pattern and the text and he knows that this Hamming distance is at least $\Theta(\eps^{-(j-1)})$, he uses the prefix $P_j$ with Hamming distance $\Theta(\eps^{-j})$ and the sketches of associated text blocks. The part of Alice's text aligned with $P'$ can be composed of several full blocks and at most one block suffix. Hamming distances between $P'$ and the full blocks can be approximated with the help of the sketches. To approximate the Hamming distance between $P'$ and the suffix of the block, Bob will substitute the suffix with the aligned part of $P_j$. As the number of mismatches between the suffix and $P_j$ is small compared to $\Theta(\eps^{-(j-1)})$, it will give a good approximation of the Hamming distance.  

\subsubsection{Problem~\ref{prob:pattern-unknown}~--- only Alice knows the pattern}
We start by reviewing some notation introduced in~\cite{CFPSS:2016}.

\begin{definition}\label{def:approxperiod}
The $x$-period of a string $S$ of length $n$ is the smallest integer $\ell > 1$ such that the Hamming distance between $S [1, n-\ell]$ and $S [\ell, n]$ is at most $x$. 
\end{definition}

\begin{example}
The $1$-period of a string $babaa$ is $2$.
\end{example}

\begin{definition}
We define the $\ell$-RLE encoding of $S$ to be the ordered set of the run-length encodings of strings $S_i =  S[i]S[\ell+ i]S[2\ell+i]\ldots S[\lfloor (n-i)/\ell \rfloor \cdot \ell + i]$, where $i \in [1,\ell]$. The size of the $\ell$-RLE encoding is the total number of runs in the encodings of strings $S_i$.
\end{definition}

\begin{example}
Let $S= aab  aab aab aab aab aab aac$. The $3$-RLE encoding of $S$ is: the run-length encoding $(a,7)$ of $S_1 = aaaaaaa$, the run-length encoding $(a,7)$ of $S_2 = aaaaaaa$, and the run-length encoding $(b,6)(c,1)$ of $S_3 = bbbbbbc$. The size of the encoding is $1+1+2=4$.
\end{example}

\noindent Note that $\ell$-RLE encoding of $S$ is deterministic and lossless. In~\cite{CFPSS:2016} it was also shown that if $\ell$ is the $x$-period of a string $S$ for some integer $x$, then the size of the $\ell$-RLE encoding is $\Oh(\ell+x)$. Intuitively, this is because each new run in the encoding of $S_i$ corresponds to a mismatch between $S [1, n-\ell]$ and $S [\ell, n]$, and therefore there can be at most $\ell+x$ runs. 

We now explain the idea of the communication protocol for Problem~\ref{prob:pattern-unknown}. Let the block size $B = \sqrt n$ and the threshold $\tau = 2\eps^{-1} \sqrt{n}$. Bob will compute a sketch for each $B^{th}$ suffix of his half of the text and send it to Charlie. Consider a particular alignment of the pattern and of the text. 

\textbf{Case 1: Hamming distance is large.} The pattern can be divided into three parts: a prefix of length at most $B-1$, a middle part aligned with one of the $n/B$ sketched suffixes of Bob's half of the text, and a suffix aligned with Charlie's half of the text. If the Hamming distance at the alignment is larger than $\tau$, then the prefix can be discarded as it will change the Hamming distance by at most $B = (\eps/2) \cdot \tau$. The Hamming distance between the rest of the pattern and the text can be approximated  easily. Charlie has received the sketch of the middle part of the pattern as well as the sketch of the suffix of Alice's half of the text which aligns with it.  Charlie can combine the sketch from Alice's part of the text with the information he has about his half of the text and then compare this sketch to the pattern sketch as required.

\textbf{Case 2: Hamming distance is small.} The main challenge is therefore alignments where the Hamming distance is smaller than $\tau$. If the $(2+\eps) \tau$-period of the pattern is larger than $B$, then there are at most $n/B$ such alignments. 
In this case, Bob can simply send the Hamming distances for all these alignments to Bob. If the period is at most $B$, then Bob will find the first alignment with small Hamming distance and will use the $\ell$-RLE encoding of the pattern and the full list of mismatches to describe the text. Using this description Charlie will be able to fully recover the corresponding suffix of the text and to compute the Hamming distances for all remaining alignments. The only technicality is that Bob does not know Charlie's half of the text and thus will not be able to compute the Hamming distances between the whole pattern and the text. We elaborate on this in Section~\ref{sec:problem2}.

\subsection{A small space streaming algorithm}
In our small space streaming algorithm we will use simpler sketches which provide a $(1+\eps)$-approximation to the Hamming distance between two binary strings of the same length $B$. The method is now folklore but is essentially an application of the technique of the Johnson-Lindenstrauss lemma~\cite{JL:1984}. To do this we create a random  $\eps^{-2} \times B$ matrix $M$ whose entries are from $\{-1,1\}$. The sketch of a string $x$ of length $B$ is then defined to be equal to $Mx$, and it is known that the appropriately scaled square of the $L_2$ norm of the difference of the sketches of two strings gives a $(1+\eps)$-approximation of the Hamming distance between them. 
We will also be using $M$ to define sketches of strings of length $\ell < B$. In this case, we simply append the strings with $(B-\ell)$ zeros and use the method describe above. The original analysis applies here as well. Finally, we will use $M$ to define ``super-sketches'' of strings of length $n-B$. Assume that a string of length $n-B$ is divided into $n/B-1$ non-overlapping blocks of size $B$. A super-sketch is then defined to be a linear combination of the sketches of the blocks. We elaborate more on sketches and super-sketches in Section~\ref{sec:algo-2}.

Now we give a high-level overview of our algorithm. The algorithm starts by preprocessing the pattern $P$. It computes and stores a super-sketch of each $(n-B)$-length substring of $P$. The algorithm then processes the text in non-overlapping blocks of length $B$, computing a sketch for each block. The blocks' sketches can be maintained efficiently as we need to maintain only one sketch at a time. The algorithm also maintains a super-sketch of the last $n/B-1$ blocks. To compute an approximation of the Hamming distance at a particular alignment, the algorithm divides the pattern into three parts: a prefix of length $(B-i)$, a middle part of length $(n - B)$, and a suffix of length $i$, where the middle part is aligned with a block border (see Figure~\ref{fig:algorithm-problem2}). 

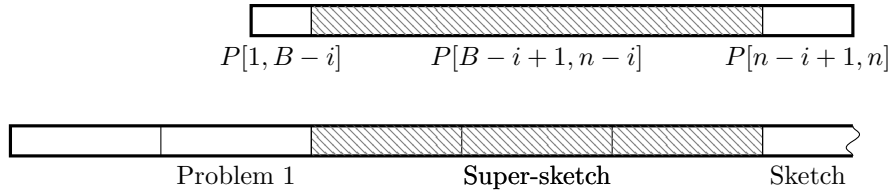
\begin{figure}[h!]
\begin{center}
\begin{tikzpicture}[scale=0.4]
	\usetikzlibrary{patterns}
	
	\draw[very thick] (38,0) -- (10,0)--(10,1)--(38,1);
	\draw[snake=snake] (38,0)--(38,1);
	\draw[very thick] (38,5) rectangle (18,4);

	\foreach \x in {3,4,...,7} {
		\draw (\x*5,0)--(\x*5,1);
	}

	\draw (20,4)--(20,5);
	\draw (35,4)--(35,5);
	
	\node[below] at (19,4) {$P[1,B-i]$};
	\node[below] at (27.5,4) {$P[B-i+1,n-i]$};
	\node[below] at (36.5,4) {$P[n-i+1,n]$};
	
	\draw[pattern=north west lines, pattern color=gray] (20,4) rectangle (35,5);
	\draw[pattern=north west lines, pattern color=gray] (20,0) rectangle (35,1);
	\node[below] at (27.5,0) {Super-sketch};
	
	\node[below] at (17.5,0) {Problem~\ref{prob:pattern-known}};
	\node[below] at (27.5,0) {Super-sketch};
	\node[below] at (36.5,0) {Sketch};

\end{tikzpicture}
\end{center}
\caption{To estimate the Hamming distances at a particular position the algorithm uses a data structure containing the information Alice transfers to Bob in our solution for Problem~\ref{prob:pattern-known} for the prefix $P[1,B-i]$, a super-sketch for the middle part $P[B-i+1, n-i]$, and a sketch for the suffix $P[n-i+1,n]$.}
\label{fig:algorithm-problem2}
\end{figure}

The algorithm then starts by computing the $(1+\eps)$-approximation of the Hamming distance between the middle part and the text with the help of the super-sketches. If the Hamming distance is large, the algorithm can simply discard the prefix and suffix parts. Otherwise, the algorithm also needs to approximate the Hamming distance between the prefix or the suffix of the pattern and the text. To approximate the Hamming distance between the prefix of the pattern and the text the idea is to use the information Alice transfers to Bob in our solution for Problem~\ref{prob:pattern-known}. For the suffix, the algorithm will use the sketch of the part of the block between its start and the current alignment.

\section{Communication complexity}	
In this section we show upper bounds for communication complexities of Problems~\ref{prob:pattern-known} and~\ref{prob:pattern-unknown}. 

\subsection{Problem~\ref{prob:pattern-known}}
\label{sec:problem1}
	We start by showing an upper bound for the communication complexity of Problem~\ref{prob:pattern-known}. Remember that in this problem we have two players, Alice and Bob. Alice knows the first half of the text $T$ and the pattern $P$, and Bob knows the second half of the text $T$ and the pattern $P$. We will show that the communication complexity of this problem is $\Oh(\eps^{-4} \log^2 n)$. 

Let us first explain what Alice sends to Bob. For simplicity, we denote $k = 6 / \eps$. First, Alice selects $q = \lfloor \log_k n \rfloor$ positions $n \ge i_1 \ge i_2 \ge \ldots \ge i_q \ge 1$ such that the Hamming distance between $T[i_j,n]$ and the prefix $P[1,n-i_j+1]$ is at most $k^{j+1}$. She does this in turn starting from $j=1$ and selecting the leftmost possible position for each $j$. Alice then sends to Bob $\Oh(k^2 \cdot \eps^{-2} \log n) = \Oh(\eps^{-4} \log n)$ bits of information for each $j$. She starts by dividing $T[i_j,n]$ into $k^2$ blocks such that the Hamming distance between each block and the corresponding substring of the pattern is at most $k^{j-1}$. If $n \ge b_1 > b_2 > \ldots > b_{k^2} = i_j$ are the borders of the blocks, she sends Bob $b_1, b_2, \ldots, b_{k^2} = i_j$ and the $(1+\eps/6)$-approximate sketches of $T[b_\ell, n]$ for all $\ell \in [1, k^2]$. In total, Alice sends to Bob $\Oh(\eps^{-4} \log^2 n)$ bits of information.

To see how Bob can use this information, consider a particular position~$i$. At this position $P[1, n-i+1]$ is aligned with Alice's half of the text, whereas $P[n-i+2,n]$ is aligned with Bob's half of the text. As Bob knows the pattern, he can compute the exact Hamming distance between $P[n-i+2,n]$ and his half of the text with no additional information. We now go on to explain how he can estimate the Hamming distance $h$ between $P[1, n-i+1]$ and Alice's half of the text. 

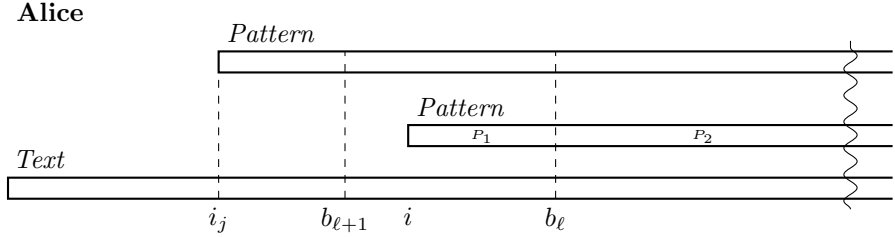
\begin{figure}[h!]
\begin{center}
\begin{tikzpicture}[scale=0.28]
	\draw[thick] (42,0) -- (0,0)--(0,1)--(42,1);
	\draw[thick] (42,2.5) -- (19,2.5)--(19,3.5)--(42,3.5);	
	\draw[thick] (42,6) -- (10,6)--(10,7)--(42,7);
	
	\draw[snake=snake] (40,-0.5)--(40,7.5);	
	
	\node[above] at (2,8) {\textbf{Alice}};	
	\node[above] at (1.5,1) {\textit{Text}};	
	\node[above] at (21.5,3.5) {\textit{Pattern}};	
	\node[above] at (12.5,7) {\textit{Pattern}};
	
	\draw[dashed] (10,0)--(10,6);
	\node[below] at (10,0) {$i_j$};
	\draw[dashed] (16,0)--(16,7);
	\node[below] at (16,0) {$b_{\ell+1}$};	
	\node[below] at (19,0) {$i$};		
	\draw[dashed] (26,0)--(26,7);
	\node[below] at (26,0) {$b_{\ell}$};
	
	\node[above] at (22.5,2.2) {\tiny{$P_1$}};
	\node[above] at (33,2.2) {\tiny{$P_2$}};	
	


\end{tikzpicture}
\end{center}
\caption{Figure shows Alice's half of the text and the rightmost position $i_j < i$. Dashed lines show block borders for $T[i_j, n]$. Borders $b_{\ell+1}$ and $b_{\ell}$ are the closest to $i$ from the left and from the right respectively. The border $b_{\ell}$ divides the pattern into two parts $P_1$ and $P_2$. To estimate the Hamming distance $h_1$ between $P_1$ and $T$, Bob uses the pattern aligned at $i_j$. To estimate the Hamming distance $h_2$ between $P_2$ and $T$, he uses the sketch of $T[b_{\ell},n]$.}
\label{fig:problem1}
\end{figure}

Bob starts by locating the position $i_j$ that is closest to $i$ from the left, and the block $T[b_{\ell+1}, b_{\ell}]$  of $T[i_j,n]$ containing $i$ (see Figure~\ref{fig:problem1}). The border $b_{\ell}$ divides the pattern into two parts, $P_1$ and $P_2$. Let $h_1$ be the Hamming distance between $P_1$ and the text, and $h_2$ be the Hamming distance between $P_2$ and the text, $h_1 + h_2 = h$. To find a $(1+\eps)$-approximation $h'_2$ of $h_2$, Bob uses the sketch of $T[b_\ell, n]$. He cannot use sketches to estimate $h_1$ as $P_1$ is not aligned with the block $T[b_{\ell+1}, b_{\ell}]$, but he knows that there are only a few mismatches between $T[b_\ell, b_{\ell+1}]$ and the pattern aligned at the position $i_j$. So he estimates $h_1$ by computing the Hamming distance $h'_1$ between $P_1$ and the pattern aligned at the position $i_j$. The next lemma shows that Bob can output $h' = (h'_1+h'_2)/(1-\eps/3)$ as a $(1+\eps)$-approximation of $h$.

\begin{lemma}
$h'$ is a $(1+\eps)$-approximation of $h$.
\end{lemma}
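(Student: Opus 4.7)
The plan is to bound $|h_1 - h'_1|$ and $|h_2 - h'_2|$ separately, showing that each is at most $\eps h/6$, and then to combine them via the triangle inequality and check the arithmetic.

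First I would bound $|h_1 - h'_1|$. By construction $h'_1$ is the Hamming distance between $P_1 = P[1, b_\ell - i]$ and the pattern substring $P[i - i_j + 1, b_\ell - i_j]$, i.e.\ what the alignment at position $i_j$ predicts for $T[i, b_\ell - 1]$. Since Hamming distance is a metric, the triangle inequality gives that $|h_1 - h'_1|$ is at most the Hamming distance between $T[i, b_\ell - 1]$ and $P[i - i_j + 1, b_\ell - i_j]$, which is exactly the number of mismatches at alignment $i_j$ falling inside the positions $[i, b_\ell - 1]$. These positions lie entirely inside one block of Alice's $k^2$-way partition of $T[i_j, n]$, and Alice built the partition so that each block contributes at most $k^{j-1}$ mismatches at alignment $i_j$. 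Hence $|h_1 - h'_1| \le k^{j-1}$.

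To convert this absolute bound into a relative one I would use Alice's leftmost-selection rule. Bob chose $i_j$ as the anchor closest to $i$ from the left, so for $j \ge 2$ we have $i_j \le i < i_{j-1}$. Since $i_{j-1}$ is defined as the leftmost index whose Hamming distance with the corresponding prefix of $P$ is at most $k^j$, the smaller position $i$ must fail this test, so $h > k^j$. Plugging in $k = 6/\eps$ gives $k^{j-1} < h/k = \eps h/6$, hence $|h_1 - h'_1| < \eps h/6$. For the second term, Bob computes $h'_2$ by combining the sketch of $T[b_\ell, n]$ received from Alice with a sketch of $P_2$ he builds himself (the players use the same random variables via Nisan's generator as shared randomness), and the guarantee for the $p$-stable sketches gives $|h'_2 - h_2| \le (\eps/6) h_2 \le (\eps/6) h$.

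Combining the two estimates by the triangle inequality, $|(h'_1 + h'_2) - h| \le \eps h/3$, i.e.\ $(1 - \eps/3)\, h \le h'_1 + h'_2 \le (1 + \eps/3)\, h$. Dividing by $1 - \eps/3$ yields $h \le h' \le \tfrac{1 + \eps/3}{1 - \eps/3}\, h$, and the inequality $\tfrac{1 + \eps/3}{1 - \eps/3} \le 1 + \eps$ reduces to $\eps^2 \le \eps$, which holds under the standing assumption $\eps < 1/2$. The substantive step --- and what I expect to be the main obstacle --- is the conversion of Alice's leftmost-selection rule into the multiplicative lower bound $h > k^j$, since this is exactly what turns the absolute per-block mismatch bound $k^{j-1}$ into a relative error of size $\eps h/6$; everything else is triangle inequalities and algebra. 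A secondary technicality is the edge case $j = 1$, where there is no $i_{j-1}$ to certify $h > k^j$; this must be handled separately, for instance by noting that once $h$ exceeds a small absolute constant the additive error $k^0 = 1$ is already negligible, and for smaller $h$ the exact distance can be encoded directly in Alice's message without affecting the overall communication budget.
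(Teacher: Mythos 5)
Your proof is correct and follows essentially the same route as the paper's: bound $|h_1-h'_1|$ additively by $k^{j-1}$ via the triangle inequality on the block's mismatches, bound $h'_2$ multiplicatively via the sketch guarantee, convert the additive error to a relative one using the leftmost-selection rule (you derive $h > k^j$, the paper states $h \ge k^{j+1}$; either suffices for $k^{j-1} \le \eps h/6$), and finish with the same algebra. Your observation that the case $j=1$ lacks a certified lower bound on $h$ and needs separate treatment is a genuine gap in the paper's own argument that you correctly flag, not a defect of your proof.
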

\begin{proof}
Remember that $h'_1$ is the Hamming distance between $P_1$ and the pattern aligned at the position $i_j$, and $h_1$ is the Hamming distance between $P_1$ and the text. The Hamming distance between the pattern aligned at the position $i_j$ and $T[b_{\ell+1}, b_{\ell}]$ is at most $k^{j-1}$. Therefore, 

$$h_1 - k^{j-1} \le h'_1 \le h_1 + k^{j-1}$$

\noindent On the other hand, $h'_2$ is a $(1+\eps/6)$-approximation of $h_2$. Hence, 

$$h_1 + h_2 - k^{j-1} \le h'_1 + h'_2 \le h_1 + k^{j-1} + (1+\eps/6) \cdot h_2$$.

\noindent We now substitute $h = h_1 + h_2$ and estimate $h_2 \le h$ to obtain 

$$h - k^{j-1} \le h'_1 + h'_2 \le (1+\eps/6) \cdot h + k^{j-1}$$

\noindent Finally, by our choice of $i_j$ we have $h \ge k^{j+1}$, and therefore

$$(1-\eps/3) \cdot h \le (1-\eps/6) \cdot h \le h'_1 + h'_2 \le (1+\eps/3) \cdot h$$

\noindent Dividing all parts of this inequality by $(1-\eps/3)$, we obtain

$$h \le h' = (h'_1 + h'_2) / (1-\eps/3) \le \frac{1+\eps/3}{1-\eps/3} \; h \le (1+\eps) \cdot h$$
\end{proof}

\subsection{Problem~\ref{prob:pattern-unknown}}
\label{sec:problem2}
	In this section we show an upper bound for the communication complexity of Problem~\ref{prob:pattern-unknown}. Remember that in this problem we have three players, Bob, Charlie, and Alice. Bob knows the first half of the text $T$, Charlie knows the second half of the text $T$, and Alice knows the pattern $P$. We will show that the communication complexity of this problem is $\Oh (\eps^{-2} \sqrt{n} \log n)$. 

Let the block size $B = \sqrt n$ and the threshold $\tau = 2\eps^{-1} \sqrt{n}$. We start by explaining what the players send to each other. Alice sends to Bob the following information:

\begin{enumerate}[1)]
\item $(1+\eps/2)$-approximate sketches of suffixes $P[i, n]$ for all $i \in [1, B]$ (Charlie will use them to estimate large Hamming distances);
\item $(1+\eps/2)$-approximate sketches of prefixes $P[1, n-j B]$ for all $j \in [1, n/B]$ (Bob will use them to find alignments with small Hamming distances);
\item The $\ell$-RLE encoding of the longest prefix $P[1, n-j^\star B]$ with $(2+\eps)\tau$-period $\ell$ smaller than $B$ (Bob will use it to describe the text).
\end{enumerate}

Overall Alice sends $\Oh( (n/B + B) \cdot \eps^{-2} \log n + ((2+\eps)\tau+B) \cdot \log n) = \Oh(\eps^{-2} \sqrt{n}  \log n) $ bits of information.

Bob starts by forwarding the information he received from Alice to Charlie. Bob also sends him $(1+\eps/2)$-approximate sketches of all suffixes $T[jB, n]$. Next, for each $j < j^\star$ Bob uses the sketch of $P [1, n-j B]$ to find $(1+\eps/2)$-approximations of Hamming distances in a block $j$. (Remember that Bob knows $T[1,n]$ and can compute a sketch for any its substring.) If the approximate value of the Hamming distance for some alignment is smaller than $(1+\eps/2) \tau$, he sends it to Charlie. Note that there is at most one such alignment in a block. Indeed, if we have two such alignments in the block, then the Hamming distance between the patterns at these alignments is at most $(2+\eps) \tau$, which is a contradiction with the $(2+\eps) \tau$-period being larger than $B$.
Moreover, Bob will not miss any alignment with the Hamming distance smaller than $\tau$. 

\begin{figure}[h!]
\begin{center}
\begin{tikzpicture}[scale=0.28]
	\draw[thick] (42,0) -- (0,0)--(0,1)--(42,1);
	\draw[thick] (42,4) -- (37,4)--(37,5)--(42,5);
	\draw[thick] (42,8) -- (37,8)--(37,9)--(42,9);
	\draw[thick] (37,8) rectangle (13,9) node[pos=.5] {\tiny{$P[1,n-j^{\star} B]$}};
	\draw[thick] (37,4) rectangle (25,5) node[pos=.5] {\tiny{$P[1,n-j^{\star\star} B]$}};	
		
	\draw[snake=snake] (40,-0.5)--(40,9.5);	
	
	\node[above] at (1,1) {\textit{Text}};	
	\node[above] at (15,9) {\textit{Pattern}};		
	\node[above] at (27,5) {\textit{Pattern}};	

	\node[above] at (1,10) {\textbf{Bob}};		
	
	\foreach \x in {1,...,8,9} {
		\draw (4*\x,0)--(4*\x,1);
	}

	\node[above] at (14,1) {\textit{Block $j^\star$}};
	\node[above] at (26.5,1) {\textit{Block $j^{\star\star}$}};	
	
	\node at (3,0.5) {$\otimes$};
	\node at (25.5,0.5) {$\otimes$};
	\node at (9,0.5) {$\otimes$};
	\node at (27,0.5) {$\times$};
	\node[below] at (25.5,0) {$p$};			
	\node at (31,0.5) {$\times$};	
	\node at (33,0.5) {$\times$};
	\node at (35,0.5) {$\times$};				
\end{tikzpicture}
\end{center}
\caption{The figure shows Bob's half of the text. Crosses show alignments where the Hamming distance is at most $\tau$. $P[1, n-j^\star B]$ is the longest prefix with $\tau(2+\eps)$-period smaller than $B$. Block $j^{\star\star} \ge j^\star$ is the first block containing a cross. Bob sends to Charlie sketches of text suffixes starting at blocks' borders, locations of all encircled crosses, and the last block.}
\end{figure}
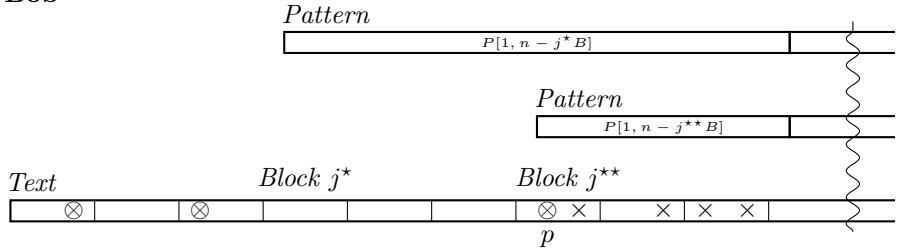

After that, Bob decodes $P[1, n-j^\star B]$ from its $\ell$-RLE encoding and computes the Hamming distances between $P[1, n - j B]$ and the text for all alignments in blocks $j \ge j^\star$ precisely. He finds the first block $j^{\star\star} \ge j^{\star}$ where there is an alignment of $P[1, n - j^{\star\star} B]$ with the Hamming distance at most $\tau$. Bob sends Charlie the starting position~$p$ of this alignment and the positions of the mismatches. Finally, he sends Charlie all bits of the last block of his half of the text. Overall, Bob sends to Charlie $\Oh(\eps^{-2} \sqrt{n} \log n + (\eps^{-2} \log{n} + \log n) \cdot (n / B) + \tau) = \Oh (\eps^{-2} \sqrt{n} \log n)$ bits of information.

We now explain how Charlie computes the Hamming distances. If the Hamming distance at a particular alignment $i$ in a block $j < j^{\star\star}$ is smaller than $\tau$, then Charlie already knows its approximate value. If it is bigger than $\tau$, then Charlie computes its approximation using the sketch of the longest suffix $P[j B-i,n]$ of $P$ aligned with a block border, the sketch of $T[(j+1) B,n]$, and $T[n+1,2n]$. Let $h$ be the Hamming distance between the text and the pattern at the alignment $i$ and let $h'$ be the Hamming distance between $T[(j+1) B, i+n-1]$ and $P[j B-i,n]$.

\begin{lemma}
$h \le h' + B \le (1+\eps) \cdot h$
\end{lemma}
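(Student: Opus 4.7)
The plan is to use the block boundary $b=(j+1)B$, which falls inside the pattern's aligned window because $i$ lies in block $j$, in order to split the pattern into two aligned pieces. Concretely, decompose $P = P_1 P_2$ where $P_1$ has length $b-i \le B$ and is aligned with $T[i, b-1]$, while $P_2$ is aligned with $T[b, i+n-1]$. Writing $h_1, h_2$ for the Hamming distances each piece contributes, we have $h = h_1 + h_2$, and by the construction used by Charlie the quantity $h'$ in the statement is exactly $h_2$: the sketch of the longest block-border-aligned suffix of $P$, together with the sketch of $T[b, n]$ and the explicit symbols of $T[n+1, 2n]$, reconstructs precisely the Hamming distance between $P_2$ and $T[b, i+n-1]$.

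Both inequalities then follow from one-line estimates. The lower bound $h \le h' + B$ is just $h_1 \le |P_1| \le B$ combined with $h = h_1 + h_2 = h_1 + h'$. For the upper bound, $h' = h_2 \le h$ gives $h' + B \le h + B$, and since Charlie only applies this estimator in the case $h > \tau = 2\eps^{-1}\sqrt{n} = 2B/\eps$, we have $B < \eps h / 2$, and so $h + B < (1+\eps/2) h \le (1+\eps) h$. The choice $\tau = 2B/\eps$ has been tuned precisely so that the additive error $B$ absorbed from discarding $P_1$ becomes a multiplicative $(1+\eps)$ error in the regime where the estimator is used.

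I expect no genuine obstacle. The only care required is bookkeeping: confirming that the ``longest suffix of $P$ aligned with a block border'' referenced in the lemma really coincides with $P_2$, and that the particular combination of Alice's suffix sketch of $P$, Bob's forwarded sketch of $T[(j+1)B, n]$, and the literal symbols of $T[n+1, 2n]$ recovers the sketch of $P_2$ against $T[b, i+n-1]$ rather than some shifted Hamming distance. Once this indexing is lined up, the lemma reduces to the two trivial observations that the Hamming distance on a length-$\ell$ window is at most $\ell$, and that $h > \tau$ makes $B$ small relative to $\eps h$.
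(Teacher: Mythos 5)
Your proof is correct and follows essentially the same route as the paper: the left inequality comes from the prefix of length at most $B$ contributing at most $B$ mismatches, and the right inequality from $\tau \le h$ forcing $B = (\eps/2)\tau \le (\eps/2)h$. The only cosmetic difference is that you phrase the identification $h' = h_2$ in terms of the sketch machinery, whereas the lemma concerns the exact Hamming distances and needs no reference to how Charlie estimates them.
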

\begin{proof}
The left inequality is trivial. To prove the right one, remember that $\tau \le h$, which implies $B = (\eps/2)\tau \le (\eps/2) \cdot h$. 
\end{proof}

We now go on to the remaining blocks. The Hamming distances at alignments $i < p$ in the block $j^{\star\star}$ are bigger than $\tau$ and Charlie can find their approximation in the way described above. Charlie then decodes $P[1,  n-j^\star B]$ and recovers $T[p,n]$ by fixing the at most $\sqrt{n}$ mismatches between $P[1,  n-j^\star B]$ and $T[p, p + n - j^{\star\star} B + 1]$ and appending the last $p - (j^{\star\star}-1) B$ symbols of $T$ (Remember that Charlie knows all symbols of the last block of $T[1,n]$). Using $T[p, n]$, $T[n+1, 2n]$, and the sketch of $P$, he can approximate the Hamming distances for all alignments to the right of $p$.

\section{Streaming algorithm}
\label{sec:algo-2}
	We now show a streaming algorithm for Problem~\ref{prob:streaming}. In this problem we are asked to output a $(1+\eps)$-approximation of the Hamming distance between the pattern and the text at each alignment, and we do not assume that we store a copy of the pattern or of the text. For $\eps < 1/2$, the algorithm uses $\Oh (\eps^{-3} \sqrt{n} \log^{1.5} n)$ bits of space and its running time is $\Oh(\eps^{-2} \log n)$ per arriving symbol. For simplicity, we will set $k = 1/\eps > 2$ for the rest of this section. 

Let $B = k \sqrt{n}$. The algorithm starts by selecting a $9k^2 \times B$ matrix $M$ and a vector $(\sigma_1, \sigma_2, \ldots, \sigma_{n/B-1})$ of i.u.d.\@  $\pm 1$ random variables. The algorithm then preprocesses the pattern $P$. It remembers the first $B$ symbols of $P$, as well as a super-sketch of each $(n-B)$-length substring of $P$. To compute the super-sketches the algorithm divides a substring into $(n/B-1)$ blocks of length $B$, computes their sketches using $M$ as described in Section~\ref{sec:overview}, and then sums the sketches multiplying them by $\sigma_i$. The algorithm also computes sketches of the last $B$ suffixes of $P$. The sketch of a suffix $P[n-i+1,n]$ is defined to be equal to $M \cdot S_i$, where $S_i = P[n-i+1,n] \; 0^{B-i}$. Finally, for each $i \in [1, B]$ and for each $j \in [0, \log_{1+\eps} n]$ it stores the maximal length of pattern's prefix such that the Hamming distance between this prefix aligned at position $i$ and the pattern is at most $(1+\eps)^j$, which takes $\Oh(\eps^{-1} B \log^2 n)$ bits since $\log_{1+\eps} n = \Oh (\eps^{-1} \log n)$.

\subsection {Text processing}
The algorithm processes the text in non-overlapping blocks of length $B$. For each of the last $n/B$ blocks the algorithm maintains its sketch and a data structure containing the information Alice transfers to Bob in our solution for Problem~\ref{prob:pattern-known}. 

Let us start by explaining how the algorithm maintains the sketches. At the starting index of each block it initialises the block's sketch with a zero vector of length $9 k^2$. When the $j^{th}$ symbol of the block arrives, the algorithm adds the product of the $j^{th}$ column of $M$ and the symbol to the sketch in $\Oh(9k^2)$ time. While reading the block the algorithm also computes the super-sketch of the $(n-B)$-length substring consisting of the $n/B-1$ most recent blocks. Recall that the super-sketch is defined to be equal to the sum of the blocks' sketches multiplied by the variables $\sigma_i$. The total time needed for computing the sum is $\Oh(9k^2 n/ B)$. The algorithm de-amortises this time over the block executing $\Omega(9k^2 n/ B^2)$ steps per arriving symbol.

For each block the algorithm maintains a data structure containing the information Alice transmits to Bob in our solution for Problem~\ref{prob:pattern-known}. The algorithm starts computing the data structure when it has received the entire block. It then computes the Hamming distance between prefixes $P[1], P[1,2], \ldots, P[1,B]$  as being aligned at the right border of the block and the block by running the fast Fourier transform algorithm on $P[1,B]$ and the block appended with $B$ zeros, which takes $\Oh(B)$ space and $\Oh(B \log B)$ time in total~\cite{FP:1974}. The algorithm then finds $i_1, i_2, \ldots, i_q$, where $q = {\lceil \log_k B \rceil}$ as defined in Problem~\ref{prob:pattern-known} and for each $i_j$ it computes the borders and the sketches of the blocks, where the sketches are defined with the help of the matrix $M$. Remember that the algorithm stores the block and the first $B$ symbols of the pattern, so this could be done in a naive way, using symbol-by-symbol comparison. Finally, the algorithm builds binary search trees on $i_1, i_2, \ldots, i_q$ and the block borders for each $i_j$ to allow fast access to the information. The total construction time of the data structure is $\Oh((B + k^2 ) \cdot \log n)$. Note that the data structure will only be used $n/B - 1 \ge 2$ blocks later, so we can de-amortise the construction time over the succeeding block executing $\Omega((1 + k^2 / B) \cdot \log n)$ steps of the construction process per symbol. The data structure occupies $\Oh (k^4 \log^2 n)$ bits of space.

\subsection {Hamming distance}
To compute the Hamming distance at an alignment $i$, the algorithm divides the pattern into three parts: a prefix of length $(B-i)$, a middle part of length $(n - B)$, and a suffix of length $i$, where the middle part is aligned with a block border. The algorithm then starts by computing the square $N$ of the norm of the difference between the super-sketches of the middle part and the corresponding text substring. Both super-sketches are already known as the middle part is an $(n-B)$-length substring of the pattern and we store its super-sketch explicitly, while the super-sketch of the text substring was computed at the end of the preceding block. As both sketches have length $9k^2$, it takes $\Oh(9k^2)$ time. Next, the algorithm computes the Hamming distance $H_s$ between the sketch of the suffix of the pattern and the part of the text block seen so far. This again takes $\Oh(9k^2)$ time. Finally, the algorithm computes an approximation $H_p$ of the Hamming distance between the prefix and the text as described in Problem~\ref{prob:pattern-known}. With the help of the binary search trees, $i_j, b_{\ell+1}$ and $b_{\ell}$ can be found in $\Oh (\log \log n + \log \log k^2)$ time. Recall that $b_{\ell}$ divides the prefix into two parts. The Hamming distance between the second part of the prefix and the text can be approximated in $\Oh (9k^2)$ time with the help of the sketches as in Problem~\ref{prob:pattern-known}, but it is not possible to use symbol-by-symbol comparison for the first part as this would take too much time. Instead, the algorithm does binary search on the prefixes' lengths it calculated during the preprocessing step which allows him to find $(1+\eps)$-approximation of the Hamming distance in $\Oh (\log \log_{1+\eps} n)$ time. It then outputs $H_p + H_m + H_s$, where $H_m = \eps^2 N /9 (1-\eps/3)$. 

\subsection{Analysis}
The running time of the algorithm is $\Oh (\eps^{-2} \log n)$ per arriving symbol. The space used is $\Oh(\eps^{-3} \sqrt{n} \log^2 n)$ bits. We now need to show that $H_p + H_m + H_s$ is a $(1+\eps)$-approximation of the Hamming distance with constant probability. It suffices to show that $H_m$ is a $(1+\eps)$-approximation of the Hamming distance between the middle part of the pattern and the text. Consider two binary strings $t$ and $p$ of length $(n-B)$. Let $sk_t$ and $sk_p$ be their super-sketches of length $9k^2$ calculated with the help of $M$ and $\sigma_i$ and let $N = \norm{sk_t - sk_p}^2$ and $\tilde{H} = \tilde{\eps}^2 N$, where $\tilde{\eps} = \eps / 3$. We will show that $\tilde{H}$ is a good approximation of the Hamming distance between $t$ and $p$. Recall that $t$ and $p$ are binary, and therefore the Hamming distance between them is equal to $\norm{t-p}^2$. 

\begin{lemma}\label{lm:sketches}
With constant probability $(1-\tilde{\eps}) \cdot \norm{t-p}^2 \le \tilde{H} \le (1+\tilde{\eps}) \cdot \norm{t-p}^2$.
\end{lemma}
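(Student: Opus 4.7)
The plan is to establish that $\tilde{H}$ has expectation $H := \norm{t-p}^2$ and small enough variance for Chebyshev's inequality to give a $(1\pm\tilde{\eps})$-approximation with constant probability.

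Setup: write $d_i = t_i - p_i \in \{-1,0,1\}^B$ for the block-wise differences and $d = \sum_i \sigma_i d_i$, so that $sk_t - sk_p = Md$ and $N = \norm{Md}^2 = \sum_r \left(\sum_{i,a}\sigma_i M_{r,a}(d_i)_a\right)^2$. Expanding the square and using independence together with the second-moment identities $\expect{\sigma_i\sigma_j} = \delta_{ij}$ and $\expect{M_{r,a}M_{r,b}} = \delta_{ab}$, only the diagonal $i=j$, $a=b$ terms survive. This gives $\expect{N} = 9k^2\sum_i\norm{d_i}^2 = 9k^2\cdot H$, and hence $\expect{\tilde{H}} = \tilde{\eps}^2\expect{N} = H$.

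For the variance, I would condition on $\sigma$. With $\sigma$ fixed, $d$ is a deterministic vector and the rows of $M$ are independent, so each $(Md)_r^2$ is a standard Alon--Matias--Szegedy estimator satisfying $\expect{(Md)_r^2 \mid \sigma} = \norm{d}^2$ and $\var{(Md)_r^2 \mid \sigma} \le 2\norm{d}^4$ (the classical fourth-moment identity for a Rademacher inner product). Summing over the $9k^2$ rows and applying the law of total variance yields
\[
\var{N} \le 18k^2 \cdot \expect{\norm{d}^4} + (9k^2)^2 \cdot \var{\norm{d}^2}.
\]
Next, $\norm{d}^2 = H + 2\sum_{i<j}\sigma_i\sigma_j\langle d_i,d_j\rangle$ is a Rademacher chaos of order $2$, so $\var{\norm{d}^2} = 4\sum_{i<j}\langle d_i,d_j\rangle^2$ and a further fourth-moment identity controls $\expect{\norm{d}^4}$ by the same quantity plus $H^2$. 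Dividing $\var{N}$ by $(9k^2)^2$ should yield a variance bound of the form $\var{\tilde{H}} = O(\tilde{\eps}^2)\cdot H^2$, after which Chebyshev's inequality certifies $\prob{|\tilde{H} - H| \ge \tilde{\eps}H} = O(1)$ with a constant that can be pushed strictly below $1/2$ by tightening the leading constant $9$ in the dimension $9k^2$.

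The main obstacle I anticipate is obtaining a tight bound on the cross-correlation sum $\sum_{i<j}\langle d_i,d_j\rangle^2$. A naive Cauchy--Schwarz estimate $\langle d_i,d_j\rangle^2 \le \norm{d_i}^2\norm{d_j}^2$ only yields $H^2/2$, which makes the second term in the variance bound $\Theta(H^2)$ and breaks the Chebyshev step. The resolution I would pursue is to bypass the two-step conditioning and instead expand $\expect{N^2}$ directly as a joint fourth-moment chaos in the independent families $\{\sigma_i\}$ and $\{M_{r,a}\}$: the ``doubly-diagonal'' contributions produce exactly $(9k^2H)^2 = \expect{N}^2$, while all other surviving index patterns are weighted by a factor $9k^2$ rather than $(9k^2)^2$, so that after division by $(9k^2)^2$ they contribute at most $O(\tilde{\eps}^2)\cdot H^2$. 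This collapse depends crucially on $t,p$ being binary so that $(d_i)_a^2 \in\{0,1\}$ and every surviving index pattern reduces to a count of coincidences, which is then absorbed into $H^2$ at the cost of a single factor of $\tilde{\eps}^2$.
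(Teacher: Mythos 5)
Your expectation computation is correct and matches the paper's, and your variance decomposition via conditioning on $\sigma$ is also correct. Moreover, you have put your finger on exactly the right obstacle: the term $(9k^2)^2\,\var{\norm{d}^2}=(9k^2)^2\cdot 4\sum_{i<j}\langle d_i,d_j\rangle^2$ is not divided down by the number of rows, because the single sign vector $\sigma$ is shared by every row of $M$. The problem is that your proposed resolution does not work. In the direct expansion of $\expect{N^2}=\sum_{r,s}\expect{X_r^2X_s^2}$ with $X_r=M_r\cdot d$, each off-diagonal pair $r\ne s$ contributes (condition on $\sigma$ and use independence of the two rows) $\expect{X_r^2X_s^2}=\expect{\norm{d}^4}=\norm{t-p}^4+4\sum_{i<j}\langle d_i,d_j\rangle^2$, and there are $9k^2(9k^2-1)$ such pairs. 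So the cross-correlation sum enters $\var{N}$ with weight $\Theta\bigl((9k^2)^2\bigr)$, not $\Theta(9k^2)$ as you claim; the ``collapse'' does not occur, binary alphabet or not. Nor can the term be tamed by a sharper bound on $\sum_{i<j}\langle d_i,d_j\rangle^2$: take $t=1^{n-B}$, $p=0^{n-B}$, so that every $d_i=\mathbf{1}$. Then $d=S\cdot\mathbf{1}$ with $S=\sum_i\sigma_i$, and $\tilde{H}\approx (S^2/m)\cdot\norm{t-p}^2$ where $m=n/B-1$; since $S^2/m$ is asymptotically $\chi^2_1$-distributed, $\tilde{H}$ falls in $[(1-\tilde{\eps})\norm{t-p}^2,(1+\tilde{\eps})\norm{t-p}^2]$ only with probability $O(\tilde{\eps})$, not with constant probability.

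For what it is worth, your analysis is more careful than the paper's own proof, which writes $\var{\tilde{H}}=\tilde{\eps}^2\cdot\var{(\sum_i\sigma_iM_1\cdot(t_i-p_i))^2}$ as though the $9k^2$ row estimators were independent; they are only conditionally independent given $\sigma$, and the neglected covariance $\mathrm{Cov}(X_r^2,X_s^2)=4\sum_{i<j}\langle d_i,d_j\rangle^2$ is precisely the term you isolated. The lemma can be rescued by changing the construction so that each row $r$ of the super-sketch uses its own independent sign vector $\sigma^{(r)}$: then the $X_r^2$ are genuinely i.i.d., each $X_r$ is an order-two Rademacher chaos satisfying $\expect{X_r^4}\le c\,\expect{X_r^2}^2$ by hypercontractivity (this is the Khintchine step the paper invokes), and averaging over $\Theta(\eps^{-2})$ independent rows gives $\var{\tilde{H}}=O(\tilde{\eps}^2)\cdot\norm{t-p}^4$, after which Chebyshev closes the argument. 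With the shared $\sigma$ as literally defined, no proof can succeed, because the statement fails on the instance above; so the gap in your write-up is not a fixable technicality of your approach but a sign that the construction itself needs to be amended.
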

\begin{proof}
Let $t_i$ and $p_i$, $i \in [1, n/B-1]$, be the blocks of $t$ and $p$ of length $B$. We have 

$$\expect {\tilde{H}} = \tilde{\eps}^2 \cdot \expect {\norm{ \sum_i \sigma_i M \cdot (t_i-p_i)}^2} = \tilde{\eps}^2 \sum_j \expect { \left (\sum_i \sigma_i M_j \cdot (t_i-p_i) \right)^2}$$

\noindent where $M_j$ is the $j^{th}$ row of $M$. As all rows of $M$ are identically distributed, we have $\expect { \left (\sum_i \sigma_i M_j \cdot (t_i-p_i) \right)^2} = \expect { \left (\sum_i \sigma_i M_1 \cdot (t_i-p_i) \right)^2}$ for all $j$, which is equal to $\norm{t-p}^2$ as if at least one of the inequalities $i_1 = i_2$ or $j_1 = j_2$ does not hold, then the variables $\sigma_{i_1} M_1[j_1]$ and $\sigma_{i_2} M_1[j_2]$ are independent and the expectation of $\sigma_{i_1} \sigma_{i_2} M_1[j_1] M_1[j_2]$ is equal to zero, and otherwise it is equal to one. So finally we have $\expect {\tilde{H}} = \norm{t-p}^2$.

\noindent We now compute the variance of $H$. We again use the fact that the rows of $M$ are independent and identically distributed.

$$\var{\tilde{H}} = \tilde{\eps}^2 \cdot \var{\left(\sum_i \sigma_i M_1 \cdot (t_i-p_i) \right)^2} \le \tilde{\eps}^2  \cdot \expect{\left(\sum_i \sigma_i M_1 \cdot (t_i-p_i) \right)^4}$$

\noindent By Khintchine's inequality there exists a universal constant $c > 0$ such that 

$$\var{\tilde{H}} \le c \; \tilde{\eps}^2  \cdot \expect {\left(\sum_i \sigma_i M_1 \cdot (t_i-p_i) \right)^2}^2 \le c \; \tilde{\eps}^2 \cdot \norm{t-p}^4$$

\noindent The claim then follows by Chebyshev's inequality.
\end{proof}

Let now $H = \eps^2 N / 9 (1-\eps/3) = \tilde{H} / (1-\eps/3)$. The probability $H$ is a $(1+\eps)$-approximation of the Hamming distance between $t$ and $p$ is at least the probability $\tilde{H}$ is in $[(1-\eps/3) \cdot \norm{t-p}^2, (1-\eps/3)(1+\eps) \cdot \norm{t-p}^2]$, which in turn can be estimated from below as 

$$\prob{\tilde{H} \in [(1-\eps/3) \cdot \norm{t-p}^2, (1+\eps/3) \cdot \norm{t-p}^2]} \ge 1 - 1/c \mbox{ (Lemma~\ref{lm:sketches}.)}$$

To justify the last transition note that $(1-\eps/3)(1+\eps) \ge  (1+\eps/3)$ for all $\eps < 1$. From above it follows that with constant probabilities $H_p$, $H_m$, and $H_s$ are $(1+\eps)$-approximations of the Hamming distances for the prefix, the middle part, and the suffix of the pattern respectively. We note that the probabilities can be made arbitrarily small by Chebyshev's inequality if we run a constant number of independent instances of the algorithm in parallel and output the sum of the medians of the values $H_p$, $H_m$, $H_s$. Correctness of the algorithm follows by the union bound.

\section{Acknowledgements}
We are grateful to T.S.\@ Jayram and Paul Beame for helpful and inspiring conversations about the problems in this paper and to Ely Porat for introducing the original streaming pattern matching problem to us.

\bibliographystyle{plain}
\bibliography{longnames,bib-latest,bristol}
\end{document}